\documentclass[letterpaper, 10 pt, conference]{ieeeconf}  

\IEEEoverridecommandlockouts                              

\usepackage{amsmath} 
\usepackage{amssymb}  
\usepackage{mathtools}
\usepackage[disable]{todonotes}

\usepackage[normalem]{ulem}

\usepackage{graphicx}      
\usepackage{subcaption}    
\usepackage{float}         
\usepackage{acronym}       
\let\labelindent\relax
\usepackage{enumitem}
\usepackage{cite}
\usepackage{pgfplots}
\pgfplotsset{compat=1.18} 

\acrodef{pbh}[PBH]{Popov-Belevitch-Hautus} 
\acrodef{uavs}[UAVs]{unmanned aerial vehicles}
\acrodef{mass}[MASs]{multi-agent systems}
\acrodef{lti}[LTI]{linear time-invariant}
\acrodef{uav}[UAV]{unmanned aerial vehicle}
\acrodef{idr}[IDR]{\emph{infinitesimal distance rigidity}}
\acrodef{rbm}[RBM]{\emph{rigid body motion}}
\acrodef{rbms}[RBMs]{\emph{rigid body motions}}
\acrodef{fvt}[FVT]{\emph{final value theorem}}
\acrodef{svd}[SVD]{\emph{singular value decomposition}} 
\acrodef{evd}[EVD]{\emph{eigen value decomposition}} 
\acrodef{siso}[SISO]{\emph{single-input single-output}}
\acrodef{mimo}[MIMO]{\emph{multi-input multi-output}} 
\acrodef{odes}[ODEs]{\emph{ordinary differential equations}} 
\acrodef{psd}[PSD]{\emph{positive semi-definite}} 
\acrodef{nsd}[NSD]{\emph{negative semi-definite}} 

\newtheorem{theorem}{Theorem}

\newtheorem{proposition}{Proposition}
\newtheorem{corollary}{Corollary}
\newtheorem{definition}{Definition}
\newtheorem{remark}{Remark}

\title{\LARGE \bf
The Geometry of Transmission Zeros in Distance-Based Formations*
}

\author{Solomon Goldgraber Casspi and Daniel Zelazo, \IEEEmembership{Senior Member, IEEE}
\thanks{*This work was supported by the Israel Science Foundation grant
no. 453/24 and the Gordon Center for Systems Engineering.}
\thanks{Both authors are with the Stephen B. Klein Faculty of Aerospace Engineering, Technion-Israel Institute of Technology, Haifa 320003, Israel 
        {\tt\small solomon.g@campus.technion.ac.il, dzelazo@technion.ac.il}}%
}

\begin{document}

\maketitle
\thispagestyle{empty}
\pagestyle{empty}

\begin{abstract}
This letter presents a geometric input-output analysis of distance-based formation control, focusing on the phenomenon of steady-state signal blocking between actuator and sensor pairs. We characterize steady-state multivariable transmission zeros, where fully excited rigid-body and deformational modes destructively interfere at the measured output. By analyzing the DC gain transfer matrix of the linearized closed-loop dynamics, we prove that for connected, flexible frameworks, structural transmission zeros are strictly non-generic; the configuration-dependent cross-coupling required to induce them occupies a proper algebraic set of measure zero. However, because extracting actionable sensor-placement rules from these complex algebraic varieties is analytically intractable, we restrict our focus to infinitesimally rigid formations. For these baselines, we prove that the absence of internal flexes forces the zero-transmission condition to collapse into an explicit affine hyperplane defined by the actuator and the global formation geometry, which we term the spatial locus of transmission zeros. Finally, we introduce the global transmission polygon--a convex polytope constructed from the intersection of these loci. This construct provides a direct geometric synthesis rule for robust sensor allocation, guaranteeing full-rank steady-state transmission against arbitrary single-node excitations.
\end{abstract}

\section{Introduction}
The coordination of autonomous multi-agent systems relies heavily on distributed control protocols, where agents maintain desired formations using only local relative measurements \cite{akyildiz2002survey, survey_general_2018}. Distance-based formation control, driven by gradient-descent laws derived from edge-tension potentials, has emerged as a robust solution for operations in GPS-denied or decentralized environments \cite{krick2009stabilisation, survey_formation_control_2015_automatica, babazadeh2019distance, OLFATISABER2002495}. 

However, under a localized exogenous disturbance--or when control authority is restricted to a single ``leader'' agent for maneuvering--the inherent spatial geometry may completely block these inputs from reaching measured outputs. From a multivariable control perspective, this phenomenon occurs when the steady-state input-output mapping becomes rank-deficient. In such cases, perfect destructive interference causes certain disturbance directions to vanish entirely at the measured output, which is formally classified as a steady-state transmission zero \cite[Sec. 4.5]{skogestad2005multivariable}.

While the algebraic and graph-theoretic conditions for transmission zeros have been explored in standard integrator consensus networks--linking signal blocking to shortest paths and residual graph structures \cite{briegel2011zeros, abad_torres_2015}--the specific geometric complexities of distance-based formations remain entirely uncharacterized. Recent geometric work on distance-based formations has primarily focused on state-space uncontrollability \cite{casspi2025geometry}, where an actuator fails to excite a hidden mode because it physically lies exactly at the center of a pure rotation.

In this letter, we address a fundamentally different phenomenon: if the actuator and sensor are not pinned \cite{casspi2025geometry}, meaning all relevant modes are successfully excited and measured, what exact spatial configurations induce a steady-state transmission zero? Translating these formal algebraic rank-deficient conditions into explicit spatial coordinates is critical for ensuring robust network operation. 

By analyzing the linearized formation dynamics, the main contribution of this letter is to geometrically characterize these steady-state transmission zeros. Specifically, our contributions are threefold:
\begin{itemize}
    \item We derive a closed-form algebraic condition for when the steady-state DC gain matrix becomes rank-deficient. Using this, we prove that transmission zeros in connected, flexible frameworks are strictly non-generic, occurring exclusively on an algebraic set of measure zero.
    \item For infinitesimally rigid formations, we define and characterize the \textit{spatial locus of transmission zeros}. We prove that the linearized steady-state signal perfectly vanishes if and only if the sensor is placed on a specific line (an affine hyperplane) dictated entirely by the actuator's position and the formation's global geometry. 
    \item We introduce the \textit{global transmission polygon}, a safe convex region constructed from the intersection of valid half-planes. This construct provides a direct geometric rule for placing sensors to guarantee full rank of the DC gain matrix. 
\end{itemize}

The remainder of this letter is organized as follows. Section \ref{sec:preliminaries} introduces the preliminaries of graph rigidity and formulates the linearized input-output network dynamics. Section \ref{sec:steady_state_analysis} derives the steady-state DC gain rank-loss conditions, proves that transmission zeros in connected, flexible networks are constrained to measure-zero algebraic sets, and extracts explicit geometric boundaries--the spatial locus and global transmission polygon--for infinitesimally rigid formations. Section \ref{sec:simulations} presents numerical simulations validating these exact geometric constraints, and Section \ref{sec:conclusion} provides concluding remarks.

\paragraph*{Notations}
The sets of real and complex numbers are $\mathbb{R}$ and $\mathbb{C}$. The $n \times n$ identity matrix is $I_n$, and $e_i \in \mathbb{R}^n$ is the $i$-th standard basis vector. The Kronecker product is $\otimes$. For a block vector $v \in \mathbb{R}^{2n}$, its $2$-dimensional spatial component at agent $i$ is $[v]_i := (e_i^\top \otimes I_2)v \in \mathbb{R}^2$. The generator of 2D rotations is $\Omega = \begin{bsmallmatrix} 0 & -1 \\ 1 & 0 \end{bsmallmatrix}$. The null space and range space of a matrix $A$ are $\ker(A)$ and $\operatorname{Im}(A)$. Finally, for a coordinate vector $p \in \mathbb{R}^{2n}$, $\mathbb{R}[p]$ denotes the ring of polynomials in $p$ with real coefficients, and $\mathbb{R}(p)$ denotes the field of rational functions in $p$.

\section{Preliminaries and Problem Formulation} \label{sec:preliminaries}
This section establishes the mathematical foundation required to formalize the geometric transmission zero problem. First, Subsection \ref{subsec:graph-rigidity} models the spatial embedding of the network using graph rigidity theory. Next, Subsection \ref{subsec:IO-dynamics} formulates the input-output formation control problem, culminating in the core objective of identifying the geometric configurations that induce a steady-state transmission zero.

\subsection{Graph Rigidity and Modal Subspaces}
\label{subsec:graph-rigidity}
Consider a formation of $n$ agents in $\mathbb{R}^2$ with an undirected sensing graph $\mathcal{G} = (\mathcal{V}, \mathcal{E})$, where $\mathcal{V} = \{1, \dots, n\}$ and $\mathcal{E} \subseteq \mathcal{V} \times \mathcal{V}$. Each agent obeys single-integrator dynamics $\dot{p}_i = u_i$, with position $p_i \in \mathbb{R}^2$ and control input $u_i \in \mathbb{R}^2$. The stacked configuration is $p = [p_1^\top, \dots, p_n^\top]^\top \in \mathbb{R}^{2n}$.

The formation geometry is encoded by the rigidity function $F : \mathbb{R}^{2n} \to \mathbb{R}^{|\mathcal{E}|}$, where each component evaluates the squared inter-agent distance, $F_\ell(p) = \frac{1}{2}\|p_i - p_j\|^2$, for edge $\ell = (i,j) \in \mathcal{E}$ \cite{rigidity_of_graphs_1_asimow_roth}. The $|\mathcal{E}| \times 2n$ rigidity matrix $R(p) := \frac{\partial F(p)}{\partial p}$ is the Jacobian of $F$ \cite{gortler2025higher}. Let $p^*$ denote a target equilibrium and assume the configuration $p^*$ affinely spans $\mathbb{R}^2$. Then $\ker(R(p^*))$ contains at least a $3$-dimensional subspace of trivial \ac{rbms}. When this is exact, i.e., when $\operatorname{dim}\ker(R(p^*))=3$, the framework is \textit{infinitesimally rigid}; if $\dim\ker(R(p^*)) = 3 + n_f$ with $n_f > 0$, it is \textit{flexible}, admitting $n_f$ internal flexes $\{z_k\}_{k=1}^{n_f}$ \cite{connelly1980rigidity}.

The \ac{rbm} subspace is exactly three-dimensional, consisting of two uniform translations and one global rotation. Their local $2$-dimensional spatial components at agent $k$ are explicitly given by
\begin{align} 
    &[v_x]_k = \frac{1}{\sqrt{n}} \begin{bmatrix} 1 \\ 0 \end{bmatrix}, \quad [v_y]_k = \frac{1}{\sqrt{n}} \begin{bmatrix} 0 \\ 1 \end{bmatrix}, \label{eq:rbm_trans} \\ &[v_r]_k = \frac{1}{\sqrt{J_p}} \Omega (p_k^* - p_{\mathrm{cm}}), \label{eq:rbm_rot}
\end{align}
where $p_{\mathrm{cm}} = \frac{1}{n}\sum_{k=1}^n p_k^*$ is the center of mass and $J_p = \sum_{k=1}^n \|p_k^* - p_{\mathrm{cm}}\|^2$ is the polar moment of inertia.

\subsection{Input-Output Dynamics}
\label{subsec:IO-dynamics}
To maintain the target formation $p^*$, the agents employ a distributed gradient-descent control law derived from the potential $V(p) = \frac{1}{2}\|F(p) - F(p^*)\|^2$ \cite{survey_formation_control_2015_automatica, krick2009stabilisation, OLFATISABER2002495}. We consider a single-actuator, single-sensor scenario where a disturbance $w \in \mathbb{R}^2$ enters at an actuated agent $i$, and the displacement $y \in \mathbb{R}^2$ is measured at a sensor-equipped agent $j$. The closed-loop input-output model is
\begin{align}
    \dot{p} &= -R(p)^\top\big(F(p) - F(p^*)\big) + Bw, \label{eq:nonlinear_io} \\
    y &= Cp, \label{eq:nonlinear_out}
\end{align}
where $B = e_i \otimes I_2$ and $C = e_j^\top \otimes I_2$.

To characterize the structural limitations of the network using classical input-output systems theory, we analyze the local behavior around the target equilibrium $(p^*, w=0)$. Letting $\delta p = p - p^*$ be the state variation, the linearized LTI dynamics are governed by the configuration-dependent stiffness matrix:
\begin{equation} \label{eq:lti_model}
    \delta\dot{p} = -R(p^*)^\top R(p^*) \delta p + B w, \qquad y = C \delta p.
\end{equation}
The stiffness matrix $A := -R(p^*)^\top R(p^*)$ inherits the spectral structure of $R(p^*)$; its kernel modes correspond to the persistent \ac{rbms} in addition to any flexes if they are present, while its strictly negative eigenvalues govern the transient, shape-changing dynamics.

Under this formulation, our objective is to identify the precise geometric configurations under which the steady-state input-output map from an actuated agent $i$ to a sensor-equipped agent $j$ becomes rank-deficient, and to derive spatial allocation conditions that prevent such steady-state signal blocking.

\section{Steady-State Analysis and\\ DC Gain Rank Drop}
\label{sec:steady_state_analysis}
This section establishes the exact algebraic and geometric conditions that induce a steady-state transmission zero. We first derive the multivariable DC gain matrix from the network's modal dynamics. We then demonstrate how the configuration-dependent cross-coupling required for signal blocking in flexible networks collapses into explicit, predictable spatial boundaries for infinitesimally rigid frameworks.

\subsection{The Transfer Function and Modal Decomposition}
To characterize how a disturbance applied at agent $i$ propagates through the network and appears at agent $j$, we analyze the input--output transfer operator of the linearized system \eqref{eq:lti_model}. Passing to the frequency domain yields $Y(s) = G_{ji}(s) W(s)$, where $G_{ji}(s) \in \mathbb{C}^{2 \times 2}$ denotes the transfer matrix mapping a $2$-dimensional disturbance applied at agent $i$ to the $2$-dimensional displacement measured at agent $j$. The transfer matrix can be written compactly as:
\begin{equation}
    G_{ji}(s) = C(sI + R(p^*)^\top R(p^*))^{-1} B.
\end{equation}

Utilizing the spectral decomposition of the symmetric stiffness matrix $-R(p^*)^\top R(p^*) = \sum_{k=1}^{2n} \lambda_k \phi_k \phi_k\!^\top$, we express the transfer function in its modal form. This naturally separates the stable, shape-changing deformational modes ($\lambda_k < 0$) from the non-decaying modes ($\lambda_k = 0$):
\begin{equation} \label{eq:transfer_function_modal}
    G_{ji}(s) = \frac{[P_0]_{ji}}{s} + \sum_{k: \lambda_k < 0} \frac{[\phi_k]_j [\phi_k]_i^\top}{s - \lambda_k},
\end{equation}
where $[\phi_k]_i := (e_i^\top \otimes I_2)\phi_k \in \mathbb{R}^2$ denotes the local $2$-dimensional spatial component of eigenvector $\phi_k$ at agent $i$. 

The matrix $[P_0]_{ji}$ is the $(j, i)$-th $2 \times 2$ block of the orthogonal projection matrix $P_0$ onto $\ker(R(p^*))$. This decomposition reveals that high-frequency disturbances are filtered through the network's deformational modes, while the steady-state behavior is governed entirely by the DC gain matrix $[P_0]_{ji}$.

For an infinitesimally rigid framework, where internal flexes are absent, the null space consists strictly of the \ac{rbms}. The orthogonal projection matrix onto this purely rigid subspace, denoted $P_{\mathrm{rbm}}$, is constructed from the basis modes defined in \eqref{eq:rbm_trans} and \eqref{eq:rbm_rot}:
\begin{align} \label{eq:P0_rigid_explicit}
    [P_{\mathrm{rbm}}]_{ji} &= [v_x]_j [v_x]_i^\top + [v_y]_j [v_y]_i^\top + [v_r]_j [v_r]_i^\top  \notag \\ &=\frac{1}{n}I_2 + [v_r]_j [v_r]_i^\top.
\end{align}

For flexible frameworks ($n_f > 0$), letting $\{z_1, \dots, z_{n_f}\}$ denote an orthonormal basis for the internal flexes, the total projection matrix must account for these additional dimensions:
\begin{equation}
    [P_0]_{ji} = [P_{\mathrm{rbm}}]_{ji} + \sum_{k=1}^{n_f} [z_k]_j [z_k]_i^\top.
\end{equation}

\subsection{Steady-State Transmission Zeros}
Following \eqref{eq:transfer_function_modal}, we focus on steady-state input--output blocking. Since the deformational terms are strictly stable, their contribution vanishes as $t\to\infty$, and the steady-state mapping is governed by the kernel term. We say that the input--output channel from agent $i$ to agent $j$ exhibits a \emph{steady-state transmission zero} if the DC gain matrix is rank-deficient, i.e., 
\begin{equation}
    \mathrm{rank} \left( \lim_{s \to 0} s G_{ji}(s) \right) < 2 \iff \det([P_0]_{ji}) = 0,
\end{equation}
which corresponds to the existence of a nonzero input direction $u\in\mathbb{R}^2$ satisfying $[P_0]_{ji}u=0$.

To evaluate this determinant algebraically, we recall that the two translational \ac{rbms} uniformly contribute a baseline of $\frac{1}{n}I_2$ to the projection. Thus, we can express the DC gain matrix generally as $[P_0]_{ji} = \frac{1}{n}I_2 + U V^\top$, where the matrices $U, V \in \mathbb{R}^{2 \times m}$ (with $m = 1 + n_f$) collect the local components of the rotational \ac{rbm} and internal flex modes at agents $j$ and $i$, respectively.

Applying Sylvester's determinant identity, the determinant evaluates to
\begin{equation} \label{eq:sylvester}
    \det([P_0]_{ji}) = \det\left(\frac{1}{n}I_2\right) \det(H(p)),
\end{equation}
where the $m \times m$ modal cross-coupling matrix $H(p) \coloneqq I_m + n V^\top U$ partitions into rotational and flex components,
\begin{equation} \label{eq:block_matrix}
    H(p) = \begin{bmatrix} 1 + n\Psi_{rr} & n\Psi_{rz} \\ n\Psi_{zr} & I_{n_f} + n\Psi_{zz} \end{bmatrix},
\end{equation}
where the submatrices quantify cross-agent modal overlaps: $\Psi_{rr} = \langle [v_r]_i, [v_r]_j \rangle$, $[\Psi_{zz}]_{ab} = \langle [z_a]_i, [z_b]_j \rangle$, $[\Psi_{rz}]_{1b} = \langle [v_r]_i, [z_b]_j \rangle$, and $[\Psi_{zr}]_{a1} = \langle [z_a]_i, [v_r]_j \rangle$. Because the local spatial geometries at agents $i$ and $j$ differ, these cross-couplings are not generally symmetric. 

Provided the actuator and sensor are not pinned \cite{casspi2025geometry}, a DC transmission zero ($\det(H(p))=0$) strictly requires the configuration-dependent modal contributions to perfectly destructively interfere with the uniform translational baseline.


\subsection{Generic Configurations and Real Algebraic Sets}

\begin{proposition}[Measure-Zero Rarity] \label{prop:measure_zero}
    For a generic connected, flexible planar framework, the set of sensor-actuator embeddings that induce a steady-state transmission zero forms a proper real algebraic variety of Lebesgue measure zero in the configuration space.
\end{proposition}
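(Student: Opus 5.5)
The approach is to show that $\det([P_0]_{ji})$, or equivalently $\det(H(p))$ by \eqref{eq:sylvester}, is a rational function of the configuration $p$ that is not identically zero. Its zero set is then contained in the zero set of a nonzero polynomial, which is a proper real algebraic variety and hence has Lebesgue measure zero.

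\textbf{Step 1: rationality.} On the generic locus, $\operatorname{rank} R(p)$ equals its maximal value $r$. There the projection onto $\ker R(p)$ can be written as
\[
P_0(p)=I-R(p)^\top\big(R(p)R(p)^\top\big)^{\dagger}R(p).
\]
To avoid the pseudo-inverse, fix a set of $r$ independent rows $R_S(p)$ and use $P_0=I-R_S^\top(R_SR_S^\top)^{-1}R_S$, which is valid on the Zariski-open set where $\det(R_SR_S^\top)\neq 0$. By Cramer's rule, each entry of $[P_0]_{ji}$ lies in $\mathbb{R}(p)$, with denominator $q(p)=\det(R_SR_S^\top)\in\mathbb{R}[p]$. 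Hence
\[
\det([P_0]_{ji})=N(p)/q(p)^2
\]
for some $N\in\mathbb{R}[p]$. The transmission-zero set inside the generic locus is therefore $\{N=0\}\setminus\{q=0\}$, a real algebraic set. The non-generic configurations (rank drop, or failure to affinely span $\mathbb{R}^2$) lie in $\{q=0\}$, which is itself a proper variety.

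\textbf{Step 2: $N\not\equiv 0$.} This is the main obstacle: exhibiting a single configuration where the determinant is nonzero. Since a polynomial is identically zero iff it vanishes on an open set, one witness suffices, and $N$ must have a generic witness.

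The first attempt would be a limiting argument that uses connectivity. For the $i=j$ case, $[P_0]_{ii}$ is a diagonal block of an orthogonal projection and contains $\frac1n I_2$. It is therefore positive definite: $[P_0]_{ii}\succeq \frac1n I_2$, by the translational contributions in \eqref{eq:P0_rigid_explicit} plus positive semidefinite terms.

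For $i\ne j$ this is not available, so instead I would choose a specific generic realization. A natural choice places agents $i$ and $j$ very close together, with $p_j^*\to p_i^*$ along an edge or a path, and then invokes continuity of $P_0$ on the constant-rank locus. The aim is to show that $[P_0]_{ji}$ approaches a matrix close to $[P_0]_{ii}\succ 0$.

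Here connectivity is what should make this work: the flex components $[z_k]_j$ and $[z_k]_i$ are coupled through the graph. The difficulty is that shrinking an edge may leave the generic-rank locus. If so, I would fall back on an explicit small example. One option is to verify $N\neq 0$ for a minimal connected flexible graph, such as a path or a tree. Another is to argue by adding vertices or edges, in the style of a Henneberg-type construction, and track $\det H$ through \eqref{eq:block_matrix}. In that route, showing $H$ is nonsingular at one point (for instance, where $\Psi$ is small relative to $1/n$) would settle it.

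\textbf{Step 3: conclusion.} Since $N$ is a nonzero polynomial, $\{N=0\}$ is a proper real algebraic variety of dimension less than $2n$. By the standard fact that nonzero polynomials vanish only on null sets (for example, by induction on the number of variables with Fubini), it has Lebesgue measure zero. Adding the measure-zero non-generic locus $\{q=0\}$ preserves this, which gives the statement of Proposition~\ref{prop:measure_zero}.
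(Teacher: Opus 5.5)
Your Step~1 is fine. Writing $P_0=I-R_S^\top(R_SR_S^\top)^{-1}R_S$ on $\{q\neq 0\}$ shows that $[P_0]_{ji}$ is rational on the constant-rank locus more cleanly than the paper's route, which builds an adjugate-based kernel basis and then projects off the rigid-body motions. The gap is Step~2, which is the entire content of the proposition. You list strategies but carry none out, and the ones you list do not work as stated. A witness has to be a configuration of the \emph{given} graph $\mathcal{G}$ with the \emph{given} pair $(i,j)$, since $N$ depends on both. Checking a path or a tree therefore says nothing about $\mathcal{G}$ unless you also carry out the Henneberg-style induction, which you do not. You also suggest that connectivity is what makes the argument work; it is not.

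More seriously, $N\not\equiv 0$ is false without an extra hypothesis, so in general no witness exists. Suppose agent $i$ has a single neighbour $k$, and put $u=\Omega(p_i-p_k)\neq 0$. The only row of $R(p)$ with a nonzero $i$-block is the one for edge $(i,k)$, and $(p_i-p_k)^\top\Omega(p_i-p_k)=0$, so $e_i\otimes u\in\ker R(p)$. Hence $P_0(e_i\otimes u)=e_i\otimes u$, and $[P_0]_{ji}u=0$ for every $j\neq i$. Physically, a leaf pushed perpendicular to its only edge just swings, and nothing reaches the rest of the network. By symmetry of $P_0$, the same happens when $j$ is a leaf.

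Concretely, the path $1$--$2$--$3$ is connected and flexible with $n_f=1$, yet $\det([P_0]_{31})\equiv 0$, so the transmission-zero set is the whole configuration space. Your fallback of testing a path would have run straight into this. The same example kills the limiting idea. Letting $p_3\to p_1$ keeps $\operatorname{rank}R=2$, yet $[P_0]_{31}$ stays singular while $[P_0]_{11}\succeq\frac1n I_2$. So different blocks of $P_0$ do not merge when the two positions coincide.

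Your own Step~1 formula makes the obstruction visible. For $i\neq j$, $[P_0]_{ji}=-(R_S)_{\cdot j}^\top(R_SR_S^\top)^{-1}(R_S)_{\cdot i}$, where $(R_S)_{\cdot i}$ is the $r\times 2$ column block of agent $i$. It follows that $\operatorname{rank}[P_0]_{ji}\le\min(\deg i,\deg j)$. The paper's proof does not close this hole either: it simply asserts that a generic, unpinned configuration with ``linearly independent modal components'' gives full rank. A correct version must add a hypothesis ruling out such structural zeros (at the least, excluding leaves at $i$ and $j$). It must then actually exhibit a full-rank configuration for the given $\mathcal{G}$ and $(i,j)$, for instance by analysing the factorization above.
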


\begin{proof}
Let $p \in \mathbb{R}^{2n}$ denote the spatial configuration. By Sylvester's determinant identity used in \eqref{eq:sylvester}, the steady-state rank-loss condition is equivalent to $\det(H(p)) = 0$.

The rigidity matrix $R(p) \in \mathbb{R}[p]^{|\mathcal{E}| \times 2n}$ is a polynomial matrix since it is the Jacobian of squared inter-agent distances. For a connected, flexible planar framework, the generic rank of $R(p)$ is $r = 2n - (3+n_f)$, so the generic null-space dimension is $(3+n_f)$. 
By permuting columns and restricting to configurations where a chosen \(r\times r\) minor is nonzero, we partition
\[
R(p) =
\begin{bmatrix}
R_{\mathrm{piv}}(p) & R_{\mathrm{free}}(p)
\end{bmatrix},
\]
where the pivot block \(R_{\mathrm{piv}}(p)\in\mathbb{R}^{r\times r}\) is nonsingular.



To construct a basis matrix for $\ker(R(p))$, we solve the right-hand-side system
\[
R_{\mathrm{piv}}(p) Z_{\mathrm{piv}} = -R_{\mathrm{free}}(p) Z_{\mathrm{free}} .
\]
Setting the free block to the identity matrix $Z_{\mathrm{free}} = I_{3+n_f}$ generates $3+n_f$ independent nullspace vectors. Using the identity
\[
R_{\mathrm{piv}}^{-1}(p) = \frac{\operatorname{adj}(R_{\mathrm{piv}}(p))}{\det(R_{\mathrm{piv}}(p))},
\]
we obtain
\begin{equation}\label{Zbasis}
Z_{\mathrm{piv}} =
- R_{\mathrm{piv}}^{-1}(p) R_{\mathrm{free}}(p)
=
-\frac{\operatorname{adj}(R_{\mathrm{piv}}(p))}{\det(R_{\mathrm{piv}}(p))} R_{\mathrm{free}}(p).
\end{equation}
Multiplying the assembled basis by the scalar polynomial $\det(R_{\mathrm{piv}}(p))$ clears the denominator without changing the column span on the generic configuration set, yielding the polynomial nullspace basis
\[
K(p) =
\begin{bmatrix}
-\operatorname{adj}(R_{\mathrm{piv}}(p)) R_{\mathrm{free}}(p) \\
\det(R_{\mathrm{piv}}(p)) I_{3+n_f}
\end{bmatrix}
\in \mathbb{R}[p]^{2n \times (3+n_f)}.
\]

To isolate the internal flexes $Z(p)$, we project $K(p)$ onto the orthogonal complement of the rigid-body motions,
\[
Z(p) = \big(I_{2n} - P_{\mathrm{rbm}}(p)\big) K(p).
\]
Since $K(p)$ has polynomial entries and the rigid-body projector $P_{\mathrm{rbm}}(p)$ is a rational matrix in the configuration coordinates $p$, the resulting matrix $Z(p)$ has entries in the field of rational functions $\mathbb{R}(p)$.

Consequently, each entry of the matrix \(H(p)\) is a rational function of the configuration coordinates \(p\) on the generic domain under consideration. Therefore its determinant is also rational, and may be written as
\[
\det(H(p)) = \frac{N(p)}{D(p)},
\]
for some polynomials \(N,D \in \mathbb{R}[p]\), with \(D \not\equiv 0\). The polynomial \(D(p)\) is obtained by clearing the denominators arising from the rigid-body projector \(P_{\mathrm{rbm}}(p)\) and from the inverse of the pivot block \(R_{\mathrm{piv}}(p)\) used in \eqref{Zbasis}. On the generic domain, these denominators do not vanish, since the selected pivot minor satisfies \(\det(R_{\mathrm{piv}}(p)) \neq 0\) and the rigid-body projector is well defined. Hence, on this domain $\det(H(p)) = 0$ if and only if $N(p)=0$.


To complete the argument, we show that the numerator polynomial \(N\) is not identically zero. From \eqref{eq:sylvester}, the condition \(N(p)=0\) is equivalent to $\det([P_0(p)]_{ji}) = 0$.  Consider any generic configuration \(\bar p\) for which the actuator and sensor are not pinned and the corresponding modal components at agents \(i\) and \(j\) are linearly independent. For such a configuration the matrix \([P_0(\bar p)]_{ji}\) has full rank, so
\[
\det([P_0(\bar p)]_{ji}) \neq 0 .
\]
Therefore \(N(\bar p)\neq 0\), implying that \(N\) is not the zero polynomial.


Thus, on the generic domain under consideration, the configurations satisfying the rank-loss condition coincide with the real algebraic set
\[
\mathcal{Z}(N) \coloneqq \{p \in \mathbb{R}^{2n} \mid N(p) = 0\}.
\]
Since \(N\) is not the zero polynomial, \(\mathcal{Z}(N)\) is a proper real algebraic subset of \(\mathbb{R}^{2n}\) \cite{bochnak1998real}. It is well known that the zero set of a nontrivial polynomial in \(\mathbb{R}^{2n}\) has Lebesgue measure zero \cite{lee2012introduction}.
\end{proof}

Practically, Proposition \ref{prop:measure_zero} implies that deploying a flexible network with a generic (random) geometric embedding yields a zero probability of encountering a steady-state structural zero. While these exact cancellations are strictly confined to lower-dimensional algebraic sets, their internal mechanics provide critical physical insight. 
Let \(M := I_{n_f}+n\Psi_{zz}\). \textcolor{black}{For configurations where $M$ is invertible, applying the Schur complement to $H(p)$ allows the determinant condition \eqref{eq:sylvester} to be evaluated as:}
\begin{equation} \label{eq:schur_complement}
\det(M)\big(1+n\Psi_{rr}-n^2\Psi_{rz}M^{-1}\Psi_{zr}\big)=0 .
\end{equation}
%
Since the determinant term is non-zero in these cases, the scalar bracket must vanish. This reveals that a structural zero requires the pure rigid-body transmission ($1 + n\Psi_{rr}$) to be perfectly annihilated by the cross-coupling with the internal deformational modes. 

The sheer algebraic complexity of maintaining this exact balance across multiple highly coupled modes intuitively reinforces their measure-zero rarity. Furthermore, because internal flexes generally lack closed-form spatial representations, mapping the algebraic zero-set $\mathcal{Z}(N)$ into explicit, actionable geometric boundaries for sensor placement is analytically intractable. This geometric fragility, contrasted against the need for rigorous design rules, directly motivates our pivot to the infinitesimally rigid baseline, a structural transition explicitly visualized in Section \ref{sec:simulations}.

\subsection{The Spatial Locus of Transmission Zeros}
Consider now an infinitesimally rigid framework, where internal flexes are identically absent ($n_f = 0$). Consequently, the cross-coupling terms between flexes and rotations vanish, and the general Schur complement condition in \eqref{eq:schur_complement} collapses cleanly to the simple scalar equation,
\begin{equation} \label{eq:rigid_scalar}
    1 + n \langle [v_r]_i, [v_r]_j \rangle = 0.
\end{equation}
Substituting the explicit geometric form of this rigid-body motion yields a linear constraint in the spatial coordinates of the sensor. This motivates the following geometric definition.

\begin{definition}[Spatial Locus of Transmission Zeros] Consider an infinitesimally rigid formation $(\mathcal{G}, p^*)$ in $\mathbb{R}^2$, and fix an actuated agent $i$. The \emph{spatial locus of transmission-zero} associated with agent $i$ is the set
\begin{equation}\label{eq:locus_def}
\mathcal{L}_i
:=
\left\{
x \in \mathbb{R}^2
\;\middle|\;
\langle p_i^* - p_{\mathrm{cm}}, x - p_{\mathrm{cm}} \rangle
=
-\frac{J_p}{n}
\right\}.
\end{equation}
\end{definition}

Geometrically, the set $\mathcal{L}_i$ defines an affine hyperplane orthogonal to the relative position vector $p_i^* - p_{\mathrm{cm}}$ and located in the half-plane opposite to the actuated agent. This leads to our main geometric result for formations in $\mathbb{R}^2$.

\begin{theorem}\label{thm:spatial-locus}
For an infinitesimally rigid formation in $\mathbb{R}^2$ actuated at agent $i$,
the DC gain matrix $[P_0]_{ji}$ loses rank if and only if $p_j^* \in \mathcal{L}_i$.
\end{theorem}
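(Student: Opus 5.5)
The plan is to compute the determinant of the explicit $2\times 2$ block \eqref{eq:P0_rigid_explicit} directly and reduce the rank condition to the scalar equation \eqref{eq:rigid_scalar}. Since the framework is infinitesimally rigid, $\ker(R(p^*))$ is exactly the span of the three rigid-body modes. Hence $P_0 = P_{\mathrm{rbm}}$ and $[P_0]_{ji} = \frac{1}{n}I_2 + [v_r]_j[v_r]_i^\top$. This matrix is a rank-one perturbation of a scalar multiple of the identity. The matrix determinant lemma (equivalently, Sylvester's identity \eqref{eq:sylvester} with $m=1$) gives
\begin{equation*}
\det([P_0]_{ji}) = \frac{1}{n^2}\Big(1 + n\,[v_r]_i^\top [v_r]_j\Big).
\end{equation*}
So rank loss holds if and only if $1 + n\langle [v_r]_i,[v_r]_j\rangle = 0$. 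I would also check this without the lemma. The eigenvalues of $\frac1n I_2 + ab^\top$ are $\frac1n$, with eigenvector orthogonal to $b$, and $\frac1n + b^\top a$. The first never vanishes, so only the second can produce a rank drop.

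Next I substitute \eqref{eq:rbm_rot}:
\begin{equation*}
\langle [v_r]_i,[v_r]_j\rangle = \frac{1}{J_p}\,(p_i^*-p_{\mathrm{cm}})^\top \Omega^\top\Omega\,(p_j^*-p_{\mathrm{cm}}).
\end{equation*}
Since $\Omega$ is orthogonal, $\Omega^\top\Omega = I_2$, and the inner product becomes $\frac{1}{J_p}\langle p_i^*-p_{\mathrm{cm}}, p_j^*-p_{\mathrm{cm}}\rangle$. The condition then reads
\begin{equation*}
\langle p_i^*-p_{\mathrm{cm}}, p_j^*-p_{\mathrm{cm}}\rangle = -\frac{J_p}{n}.
\end{equation*}
This is exactly the membership condition $p_j^*\in\mathcal{L}_i$ from \eqref{eq:locus_def}, and every step is an equivalence, which gives both directions of the theorem.

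The main point needing care is well-definedness rather than the algebra itself. I need $J_p>0$, which follows from the standing assumption that $p^*$ affinely spans $\mathbb{R}^2$, since then not all agents coincide with $p_{\mathrm{cm}}$. I also need the three modes in \eqref{eq:rbm_trans}--\eqref{eq:rbm_rot} to be orthonormal, so that the projector really is the sum of their outer products. Unit norms are immediate. The translations are orthogonal to each other trivially. Each translation is orthogonal to $v_r$ because $\sum_k \Omega(p_k^*-p_{\mathrm{cm}}) = 0$ by the definition of the center of mass.

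Finally, I would note that the case $i=j$ is consistent with this result. In that case the left side equals $\|p_i^*-p_{\mathrm{cm}}\|^2\ge 0 > -J_p/n$, so collocated pairs never lose rank. This also supports the remark that $\mathcal{L}_i$ lies in the half-plane opposite the actuator.
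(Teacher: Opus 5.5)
Your proposal is correct and follows essentially the same route as the paper. You reduce rank loss to $1+n\langle [v_r]_i,[v_r]_j\rangle=0$ via Sylvester's identity (the matrix determinant lemma with $m=1$), substitute the rotational mode, use $\Omega^\top\Omega=I_2$, and rearrange into the defining equation of $\mathcal{L}_i$. Your extra checks ($J_p>0$, orthonormality of the rigid-body basis, and the eigenvalue argument) make explicit what the paper takes for granted.
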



\begin{proof}
Substituting the explicit physical representation of the normalized global rotational mode from \eqref{eq:rbm_rot} into the scalar rank-drop condition \eqref{eq:rigid_scalar} yields
\begin{equation*}
    1 + \frac{n}{J_p} \langle \Omega (p_i^* - p_{\mathrm{cm}}), \Omega (p_j^* - p_{\mathrm{cm}}) \rangle = 0.
\end{equation*}
Because the 2D rotational generator $\Omega$ is an orthogonal matrix ($\Omega^\top \Omega = I_2$), the inner product is invariant under rotation. The expression simplifies to the standard Euclidean inner product of the relative position vectors,
\begin{equation}\label{eq:spatial_locus}
    1 + \frac{n}{J_p} \langle p_i^* - p_{\mathrm{cm}}, p_j^* - p_{\mathrm{cm}} \rangle = 0.
\end{equation}
Rearranging this equation directly isolates the spatial constraints on the measured agent $j$,
\begin{equation*}
    \langle p_i^* - p_{\mathrm{cm}}, p_j^* - p_{\mathrm{cm}} \rangle = -\frac{J_p}{n},
\end{equation*}
which exactly matches the geometric affine hyperplane defining the spatial locus.
\end{proof}

\begin{remark}[Directional Signal Blocking]
Since the DC gain matrix $[P_0]_{ji} \in \mathbb{R}^{2 \times 2}$ has a strictly positive trace (due to the $\frac{1}{n}I_2$ translation baseline), the condition $\det([P_0]_{ji}) = 0$ implies a rank drop from 2 to 1, never to 0. Physically, a sensor on this locus is not completely blind but suffers severe directional signal blocking, unable to detect disturbances projecting onto the one-dimensional null space of the DC gain matrix. \hfill $\Diamond$
\end{remark}

Geometrically, this transmission zero boundary is governed entirely by the formation's geometry. The polar moment of inertia $J_p$ acts as a spatial scaling parameter. For a fixed relative actuator position, formations with a larger spatial spread (larger $J_p$) push the zero-transmission hyperplane further away from the center of mass, thereby expanding the permissible regions for sensor placement. Conversely, highly compact formations draw these zero-transmission hyperplanes proportionally closer to the center of mass.

\begin{remark}[Contrast with State-Space Uncontrollability]
The spatial locus complements the geometric controllability boundaries in \cite{casspi2025geometry}. There, uncontrollability arises because the actuator lies at the center of a pure rotation, making its local component strictly zero and the mode simply cannot be excited. Here, the phenomenon is fundamentally different: all rigid-body modes are fully excited and the sensor is not pinned, yet the uniform translational contribution to the DC gain and the position-dependent rotational contribution cancel exactly along a spatial hyperplane. Thus, while the actuator's position dictates where rotations cannot be excited (state-space uncontrollability), the global geometric invariants $p_{\mathrm{cm}}$ and $J_p$, dictates where fully-excited motions become invisible at the output. \hfill $\Diamond$
\end{remark}

From a design perspective, this theorem provides a rigorous geometric constraint: to ensure full steady-state transmission of a localized disturbance at agent $i$, the sensor $j$ must not be placed anywhere along this hyperplane.

\begin{corollary}[Actuator-Sensor Reciprocity]
For an infinitesimally rigid formation, the geometric condition for a steady-state transmission zero is strictly reciprocal. That is, if the steady-state mapping from an actuated agent $i$ to a measured agent $j$ is rank-deficient, then the reciprocal mapping from agent $j$ to agent $i$ is identically rank-deficient.
\end{corollary}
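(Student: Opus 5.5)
The plan is to reduce the claim to the scalar rank-drop condition \eqref{eq:rigid_scalar} and observe that this condition is symmetric under the swap $i \leftrightarrow j$. Two routes are available. The direct algebraic route uses the explicit form \eqref{eq:P0_rigid_explicit}: since $[P_{\mathrm{rbm}}]_{ij} = \frac{1}{n}I_2 + [v_r]_i[v_r]_j^\top$, we have
\begin{equation*}
[P_{\mathrm{rbm}}]_{ij} = \left([P_{\mathrm{rbm}}]_{ji}\right)^\top .
\end{equation*}
This also follows abstractly from the symmetry of the orthogonal projector $P_{\mathrm{rbm}}$. Because a $2\times 2$ matrix and its transpose have equal determinants, $\det([P_0]_{ij}) = \det([P_0]_{ji})$. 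The DC gain from $j$ to $i$ is therefore rank-deficient exactly when the DC gain from $i$ to $j$ is.

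I would also give the geometric route, because it is the form the corollary is phrased in. By Theorem~\ref{thm:spatial-locus}, rank loss of $[P_0]_{ji}$ is equivalent to $p_j^* \in \mathcal{L}_i$, which by \eqref{eq:locus_def} reads
\begin{equation*}
\langle p_i^* - p_{\mathrm{cm}},\, p_j^* - p_{\mathrm{cm}} \rangle = -\frac{J_p}{n}.
\end{equation*}
The left-hand side is a symmetric bilinear form in the two relative positions. The right-hand side depends only on the global invariants $p_{\mathrm{cm}}$, $J_p$ and $n$, none of which changes when the roles of actuator and sensor are swapped. The same equation is therefore exactly the statement $p_i^* \in \mathcal{L}_j$. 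Applying Theorem~\ref{thm:spatial-locus} again, with $j$ as the actuator and $i$ as the sensor, gives rank loss of $[P_0]_{ij}$.

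I do not expect a real obstacle. The one point to check is that Theorem~\ref{thm:spatial-locus} is stated for an arbitrary actuated agent, so it applies with the roles exchanged. Its hypotheses (infinitesimal rigidity, and $p^*$ affinely spanning $\mathbb{R}^2$ so that $J_p>0$) concern only the framework, not the choice of $i$ and $j$. Finally, I would note that reciprocity relies on $A = -R^\top R$ being symmetric. For flexible frameworks the same transpose argument still gives $\det([P_0]_{ij})=\det([P_0]_{ji})$. The corollary is stated for the rigid case only because that is where the explicit locus is available.
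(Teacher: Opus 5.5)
Your proposal is correct. Your second, geometric route is exactly the paper's proof. The paper only notes that the scalar condition $\langle p_i^* - p_{\mathrm{cm}}, p_j^* - p_{\mathrm{cm}}\rangle = -J_p/n$ is invariant under $i \leftrightarrow j$ by commutativity of the inner product. That is your equivalence $p_j^*\in\mathcal{L}_i \iff p_i^*\in\mathcal{L}_j$, read through Theorem~\ref{thm:spatial-locus}.

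Your first route is genuinely different and strictly more general. Since $A=-R(p^*)^\top R(p^*)$ is symmetric, $(sI-A)^{-1}$ is symmetric, so $G_{ij}(s)=G_{ji}(s)^\top$ at every $s$, and in particular $[P_0]_{ij}=[P_0]_{ji}^\top$. Rank equality of the two channels then follows at once, for any framework, rigid or flexible, and at every frequency. You need neither Sylvester's identity, nor the explicit rotational mode, nor Theorem~\ref{thm:spatial-locus}. It also shows that the blocked input direction of the reciprocal channel is the left null vector of the original DC gain. On the locus these directions are $\Omega(p_i^*-p_{\mathrm{cm}})$ for the channel $j\to i$ and $\Omega(p_j^*-p_{\mathrm{cm}})$ for the channel $i\to j$.

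What the paper's route buys instead is the geometric statement itself: the loci are mutually reciprocal. This is what the simulation figure's caption uses to conclude $p_1^*\in\mathcal{L}_2$ from $p_2^*\in\mathcal{L}_1$. However, that route is tied to infinitesimal rigidity through Theorem~\ref{thm:spatial-locus}. Your transpose argument shows that reciprocity really comes from the symmetric gradient structure, so your closing remark that rigidity is not needed for the rank statement is correct.
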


\begin{proof}
The scalar rank-drop condition \eqref{eq:spatial_locus} requires $\langle p_i^* - p_{\mathrm{cm}}, p_j^* - p_{\mathrm{cm}} \rangle = -J_p / n$. By the commutativity of the Euclidean inner product, this condition is perfectly invariant under the permutation of the indices $i$ and $j$.
\end{proof}

\subsection{The Global Transmission Polygon}
Theorem \ref{thm:spatial-locus} provides a strict sensor placement constraint given a known actuator location. However, in practical multi-agent deployments, the network may be subjected to exogenous disturbances at unknown or arbitrary nodes. To guarantee steady-state signal transmission against all possible single-node excitations, the measured agent must avoid the spatial loci of all potential actuators simultaneously.

For an actuated agent $i \in \mathcal{V}$, the zero-transmission hyperplane $\mathcal{L}_i$ partitions $\mathbb{R}^2$ into two half-spaces. We define the strictly safe transmission half-space $\mathcal{H}_i$ containing the center of mass as,
\begin{equation}
    \mathcal{H}_i = \left\{ x \in \mathbb{R}^2 \;\middle|\; \langle p_i^* - p_{\mathrm{cm}}, x - p_{\mathrm{cm}} \rangle > -\frac{J_p}{n} \right\}.
\end{equation}

To ensure full rank of the DC gain matrix $[P_0]_{ji}$ for a disturbance at any arbitrary node, the sensor must reside within the intersection of all such safe half-spaces. We define this region below.
\begin{definition}[Global Transmission Polygon]
For a planar infinitesimally rigid formation, the \emph{global transmission polygon} is the convex set
\begin{equation}
\mathcal{C} := \bigcap_{i \in \mathcal{V}} \mathcal{H}_i.
\end{equation}
\end{definition}

Furthermore, $\mathcal{C}$ is strictly non-empty because it contains the center of mass; evaluating $x=p_{\mathrm{cm}}$ yields $0>-J_p/n$, which holds intrinsically since $J_p,n>0$.

\begin{proposition}[Omnidirectional Transmission]
If a sensor is placed on an agent $j$ such that $p^*_j\in \mathcal{C}$, the network's steady-state DC gain matrix $[P_0]_{ji}$ retains full rank for any actuated agent $i\in \mathcal{V}$.
\end{proposition}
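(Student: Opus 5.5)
The plan is to reduce the statement to Theorem~\ref{thm:spatial-locus}, applied once for each possible actuator. Fix an arbitrary actuated agent $i \in \mathcal{V}$. Since the formation is infinitesimally rigid, Theorem~\ref{thm:spatial-locus} says that $[P_0]_{ji}$ is rank-deficient exactly when $p_j^* \in \mathcal{L}_i$. It therefore suffices to show that $p_j^* \in \mathcal{C}$ implies $p_j^* \notin \mathcal{L}_i$ for every $i$.

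By the definition $\mathcal{C} = \bigcap_{i\in\mathcal{V}} \mathcal{H}_i$, the hypothesis $p_j^* \in \mathcal{C}$ gives $p_j^* \in \mathcal{H}_i$ for each $i$. This means the strict inequality
\begin{equation*}
\langle p_i^* - p_{\mathrm{cm}}, p_j^* - p_{\mathrm{cm}} \rangle > -\frac{J_p}{n}
\end{equation*}
holds for each $i$. The equality defining $\mathcal{L}_i$ in \eqref{eq:locus_def} is then violated, so $p_j^* \notin \mathcal{L}_i$.

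For an alternative check that bypasses the locus, the strict inequality can be used directly. Multiplying by $n/J_p > 0$ gives $1 + \frac{n}{J_p}\langle p_i^* - p_{\mathrm{cm}}, p_j^* - p_{\mathrm{cm}}\rangle > 0$. Using $\Omega^\top\Omega = I_2$ and \eqref{eq:rbm_rot}, this expression equals $1 + n\langle [v_r]_i,[v_r]_j\rangle$, so it is strictly positive. By \eqref{eq:sylvester} with $m=1$, $\det([P_0]_{ji}) = \frac{1}{n^2}\bigl(1+n\langle [v_r]_i,[v_r]_j\rangle\bigr) > 0$, which gives full rank.

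There is one degenerate case to address: $i=j$, where the sensor sits on the actuated agent. No special treatment is needed, because $\langle p_i^*-p_{\mathrm{cm}}, p_i^*-p_{\mathrm{cm}}\rangle \ge 0 > -J_p/n$, so every agent automatically lies in its own safe half-space and the same argument applies.

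The only step that needs care is the passage from the determinant to the scalar condition. This requires that $\ker(R(p^*))$ be exactly the rigid-body subspace, so that $[P_0]_{ji} = [P_{\mathrm{rbm}}]_{ji}$ as in \eqref{eq:P0_rigid_explicit}. That is guaranteed by infinitesimal rigidity, together with $J_p>0$ from the affine-span assumption. Once this is in place, the proof is a direct pointwise application of Theorem~\ref{thm:spatial-locus} over all $i \in \mathcal{V}$.
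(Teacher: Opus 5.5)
Your proof is correct and follows the paper's argument: $p_j^* \in \mathcal{C}$ gives $p_j^* \in \mathcal{H}_i$ for every $i$, the strict inequality rules out $p_j^* \in \mathcal{L}_i$, and Theorem~\ref{thm:spatial-locus} then gives full rank. Your direct Sylvester computation is also valid and slightly sharpens the conclusion to $\det([P_0]_{ji}) > 0$ on $\mathcal{C}$, but it is not needed.
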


\begin{proof}
If $p^*_j \in \mathcal{C}$, then by definition $p^*_j \in \mathcal{H}_i$ for all $i\in \mathcal{V}$. Consequently, the sensor does not intersect any zero-transmission hyperplane $\mathcal{L}_i$, and by Theorem \ref{thm:spatial-locus}, no steady-state structural transmission zeros can exist.

\end{proof}

From a network synthesis perspective, the global transmission polygon dictates that sensors should be allocated to agents located nearest to the formation's center of mass. Agents on the geometric periphery are more likely to fall outside this polygon, rendering them susceptible to steady-state transmission zeros from disturbances originating on the opposite side of the formation.

\begin{corollary}[Collocated Transmission]
For an infinitesimally rigid formation in $\mathbb{R}^2$, a collocated actuator-sensor pair cannot produce a steady-state transmission zero. In particular, $p_i^* \notin \mathcal{L}_i$ for all $i \in \mathcal{V}$.
\end{corollary}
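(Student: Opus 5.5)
We take $j=i$ in Theorem~\ref{thm:spatial-locus}, which reduces the corollary to showing $p_i^* \notin \mathcal{L}_i$. By the theorem, the collocated DC gain $[P_0]_{ii}$ loses rank if and only if
\begin{equation*}
\langle p_i^* - p_{\mathrm{cm}}, p_i^* - p_{\mathrm{cm}} \rangle = -\frac{J_p}{n}.
\end{equation*}
The left-hand side equals $\|p_i^* - p_{\mathrm{cm}}\|^2 \ge 0$. The right-hand side is strictly negative, because $n>0$ and $J_p = \sum_k \|p_k^* - p_{\mathrm{cm}}\|^2 > 0$. Positivity of $J_p$ holds because $p^*$ affinely spans $\mathbb{R}^2$ by the standing assumption of Subsection~\ref{subsec:graph-rigidity}, so not every agent sits at the center of mass. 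The equality therefore cannot hold, which gives $p_i^* \notin \mathcal{L}_i$, and the theorem then yields full rank of $[P_0]_{ii}$.

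As an independent check, and to make explicit that collocation is what matters, we can argue directly from \eqref{eq:P0_rigid_explicit}. It gives
\begin{equation*}
[P_0]_{ii} = \tfrac{1}{n}I_2 + [v_r]_i[v_r]_i^\top.
\end{equation*}
This is the sum of a positive definite matrix and a positive semidefinite rank-one matrix, so it is positive definite and hence nonsingular. Equivalently, $[P_0]_{ii}$ is a principal diagonal block of the orthogonal projector $P_0$, and it dominates the translational contribution $\tfrac1n I_2$.

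There is no real obstacle here. The only point needing care is the strict positivity of $J_p$, which comes from the affine-span assumption, so that the right-hand side is strictly negative rather than zero. With that in hand, the sign mismatch between a squared norm and a negative constant closes the argument.
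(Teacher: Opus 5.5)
Your proposal is correct and its main argument matches the paper's: set $j=i$ in the rank-drop condition, then compare the nonnegative $\|p_i^*-p_{\mathrm{cm}}\|^2$ with the strictly negative $-J_p/n$. Your affine-span justification of $J_p>0$ supplies a step the paper only asserts, and your positive-definiteness cross-check is also valid; it does not even need rigidity, since $[P_0]_{ii}\succeq \tfrac{1}{n}I_2$ holds for any framework.
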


\begin{proof}
Consider the collocated case where $i = j$. The rank-drop condition \eqref{eq:spatial_locus} evaluates to $\|p_i^* - p_{\mathrm{cm}}\|^2 = -J_p / n$. Since the left-hand side is non-negative and the right-hand side is strictly negative ($J_p > 0, n > 0$), the equality cannot be satisfied for any real framework embedding. Thus, the condition for a steady-state transmission zero is never met, ensuring $p_i^* \in \mathcal{H}_i$.
\end{proof}

Concluding this section, we have established a geometric-driven analysis for steady-state transmission zeros. We proved that connected, flexible networks only admit structural zeros at non-generic embeddings that occupy measure-zero subsets of the configuration space. For the strictly infinitesimally rigid case, we derived an explicit geometric condition that characterizes the spatial locus of these zeros as affine hyperplanes. By synthesizing these boundaries into the global transmission polygon, we provide a geometry-based design rule for sensor allocation. We now validate these exact theoretical limits and demonstrate the structural transition between flexible and rigid embeddings through numerical simulations.

\section{Numerical Simulations}
\label{sec:simulations}

To validate our geometric framework, we construct an asymmetric 4-agent formation in $\mathbb{R}^2$ with $p_{\mathrm{cm}} = (0,0)$ and polar moment $J_p = 16$. The agents are positioned at $p_1^*=(1.5, 1)$, $p_2^*=(-2, -1)$, $p_3^*=(1.5, -1.5)$, and $p_4^*=(-1, 1.5)$. To construct the formation's global transmission polygon, Figure \ref{fig:transmission_core_exact} plots the spatial loci $\mathcal{L}_i$ for each agent $i \in \{1,2,3,4\}$, representing the zero-transmission hyperplanes if any agent were to act as the exogenous actuator. To explicitly demonstrate the resulting steady-state signal blocking, we designate agent 1 as the actuator. By allocating the sensor to agent 2, whose target coordinate lands exactly on the zero-transmission hyperplane $\mathcal{L}_1$, we deliberately enforce the geometric rank-drop condition. 

Figure \ref{fig:singular_values} verifies this structural blocking: the minimum singular value of $sG_{ji}(j\omega)$ tends to zero as $\omega \to 0$, confirming a rank drop of the DC gain, validating Theorem \ref{thm:spatial-locus}.

Finally, to demonstrate the complexity of the general flexible condition \eqref{eq:schur_complement}, we remove the internal bracing edge (dashed black line, Fig. \ref{fig:transmission_core_exact}) to create a flexible 4-bar linkage. As shown in Figure \ref{fig:singular_values}, the introduction of internal flexes fundamentally alters the required geometric cross-coupling, breaking the precise destructive interference and restoring the full DC rank of the transfer matrix.

\begin{figure}[htbp!]
    \centering
    \begin{tikzpicture}
        \begin{axis}[
            width=2.9in,               
            axis equal image,
            grid=both,
            grid style={dashed, gray!30},
            xlabel={Position $x_1$},
            ylabel={Position $x_2$},
            label style={font=\small},
            tick label style={font=\footnotesize},
            xtick={-4, -3, -2, -1, 0, 1, 2, 3, 4, 5},
            ytick={-4, -3, -2, -1, 0, 1, 2, 3, 4},
            xmin=-4.5, xmax=5.5,
            ymin=-4.5, ymax=4.5
        ]

        \tikzstyle{agent}=[circle, fill=black, inner sep=0pt, minimum size=5pt]
        \tikzstyle{sensor}=[circle, fill=cyan!80!blue, inner sep=0pt, minimum size=5pt]
        \tikzstyle{edge}=[thick, gray!80]
        \tikzstyle{flexedge}=[thick, dashed, black]
        \tikzstyle{locus}=[red, thick, dashed]


        \fill[green!20, opacity=0.7] 
            (2.5, -1.0) --
            (0.44, 3.11) --
            (-2.67, 0.0) --
            (-0.62, -3.08) -- cycle; 

        \draw[locus] (-4.0, 2.0) -- (0.33, -4.5) node[pos=0.15, above right, font=\footnotesize, text=red, fill=white, inner sep=1pt] {$\mathcal{L}_1$};
        \draw[locus] (0.0, 4.0) -- (4.25, -4.5) node[pos=0.80, above right, font=\footnotesize, text=red, fill=white, inner sep=1pt] {$\mathcal{L}_2$};
        \draw[locus] (-4.5, -1.83) -- (1.83, 4.5) node[pos=0.85, above left, font=\footnotesize, text=red, fill=white, inner sep=1pt] {$\mathcal{L}_3$};
        \draw[locus] (-2.75, -4.5) -- (5.5, 1.0) node[pos=0.90, below, font=\footnotesize, text=red, fill=white, inner sep=1pt] {$\mathcal{L}_4$};

        \draw[edge] (-2.0, -1.0) -- (1.5, -1.5);     
        \draw[edge] (1.5, 1.0) -- (1.5, -1.5);       
        \draw[edge] (-2.0, -1.0) -- (-1.0, 1.5);    
        \draw[edge] (-1.0, 1.5) -- (1.5, 1.0);      
        \draw[flexedge] (1.5, -1.5) -- (-1.0, 1.5); 

        \node[agent] at (1.5, 1.0) {};       
        \node[sensor] at (-2.0, -1.0) {};    
        \node[agent] at (1.5, -1.5) {};      
        \node[agent] at (-1.0, 1.5) {};      
        \node[black, above right, font=\tiny, inner sep=1pt] at (1.5, 1.0) {$p_1^*$};
        \node[blue, below left, font=\tiny, inner sep=1pt] at (-2.0, -1.0) {$p_2^* \in \mathcal{L}_1$};
        \node[black, below right, font=\tiny, inner sep=1pt] at (1.5, -1.5) {$p_3^*$};
        \node[black, above left, font=\tiny, inner sep=1pt] at (-1.0, 1.5) {$p_4^*$};

        \node[red, font=\small] at (0,0) {$\times$};
        \node[red, below right, font=\footnotesize, fill=white, fill opacity=0.7, inner sep=1pt] at (0,0) {$p_{\mathrm{cm}}$};
        \end{axis}
    \end{tikzpicture}
    \caption{Global Transmission Polygon for an asymmetric 4-agent formation. Agent $p_2^*$ lies precisely on $\mathcal{L}_1$, inducing an exact rank loss ($\langle p_1^* - p_{\mathrm{cm}}, p_2^* - p_{\mathrm{cm}} \rangle = -J_p/n = -4$). By reciprocity, $p_1^* \in \mathcal{L}_2$. Agents $p_3$ and $p_4$ reside strictly inside the shaded polygon, guaranteeing omnidirectional signal transmission.}
    \label{fig:transmission_core_exact}
\end{figure}
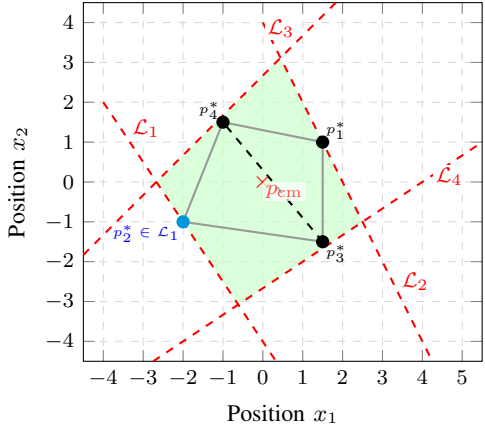

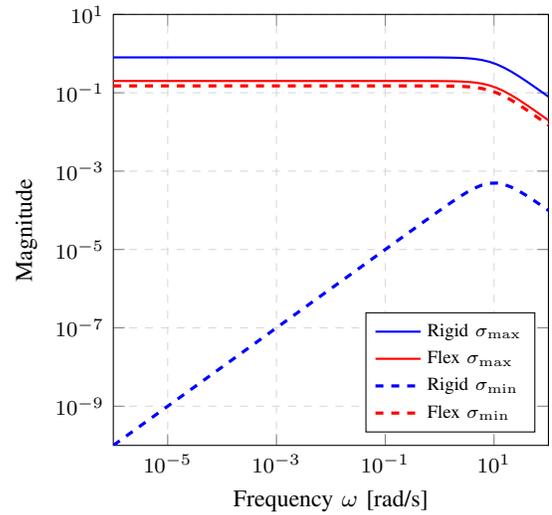
\begin{figure}[htbp!]
    \centering
    \begin{tikzpicture}
        \begin{loglogaxis}[
            width=2.9in,
            height=2.88in,             
            grid=both,
            grid style={dashed, gray!30},
            xlabel={Frequency $\omega$ [rad/s]},
            ylabel={Magnitude},
            xmin=1e-6, xmax=1e2,
            ymin=1e-10, ymax=10,       
            label style={font=\small},
            tick label style={font=\footnotesize},
            legend style={font=\scriptsize, at={(0.97,0.03)}, anchor=south east},
            legend cell align={left}
        ]

        \addplot[thick, blue, domain=1e-6:1e2, samples=60] {0.8 / sqrt(1 + (x/10)^2)};
        \addlegendentry{Rigid $\sigma_{\max}$}

        \addplot[thick, red, domain=1e-6:1e2, samples=60] {0.2 / sqrt(1 + (x/10)^2)};
        \addlegendentry{Flex $\sigma_{\max}$}

        \addplot[very thick, dashed, blue, domain=1e-6:1e2, samples=60] {1e-4 * x / (1 + (x/10)^2)};
        \addlegendentry{Rigid $\sigma_{\min}$}

        \addplot[very thick, dashed, red, domain=1e-6:1e2, samples=60] {0.15 / sqrt(1 + (x/10)^2)};
        \addlegendentry{Flex $\sigma_{\min}$}

        \end{loglogaxis}
    \end{tikzpicture}
    \caption{Singular values of $sG_{ji}(j\omega)$ for the 4-agent formation. The maximum singular values (solid lines) remain strictly positive due to the invariant uniform translational modes. For the infinitesimally rigid framework, the minimum singular value (blue dashed) strictly drops as $\omega \to 0$, confirming a DC rank of 1. Introducing internal flexes breaks this exact geometric cancellation, restoring the full DC rank (red dashed).}
    \label{fig:singular_values}
\end{figure}


\section{Conclusion} \label{sec:conclusion}

This letter presented a geometric framework characterizing steady-state transmission zeros in distance-based formations. We proved that structural zeros in flexible planar frameworks are confined to measure-zero algebraic sets, while infinitesimally rigid formations natively resolve these conditions into explicit spatial hyperplanes. These boundaries form the global transmission polygon, yielding a direct synthesis rule for robust sensor-actuator placement.

Extending to arbitrary dimensions $d\geq 3$ requires generalizing the algebraic rank-drop conditions and navigating the non-commutative Lie algebra of $SO(d)$ to extract explicit geometric boundaries.

Furthermore, future work will transition from this linearized input-output analysis to the exact nonlinear dynamics, investigating the geometric configurations that induce nonlinear indistinguishability to bridge our steady-state structural zero analysis with the system's exact observability properties.

\bibliographystyle{IEEEtran}
\bibliography{bibio}

\end{document}